\documentclass[a4paper,10pt]{article}
\usepackage[utf8x]{inputenc}
\usepackage{amssymb}
\usepackage{amsthm}
\usepackage{pstricks,pst-node,pst-plot}
\usepackage{enumerate}
\usepackage{booktabs}
\usepackage{todonotes}
\sloppy
\reversemarginpar

\title{A Note on Connected Dominating Set in Graphs Without Long Paths And Cycles}
\author{Eglantine Camby\\
Universit\'e Libre de Bruxelles\\
D\'epartement de Math\'ematique\\
Boulevard du Triomphe, 1050 Brussels, Belgium\\
\texttt{ecamby@ulb.ac.be}\\~\\
Oliver Schaudt\thanks{Parts of this research have been carried out during the visit of Oliver Schaudt to Universit\'e Libre de Bruxelles, and later the visit of Eglantine Camby to Universit\'e Pierre et Marie Curie.}\\
Universit\'e Pierre et Marie Curie\\
Combinatoire et Optimisation\\
4 place Jussieu, 75252 Paris, France\\
\texttt{schaudt@math.jussieu.fr}}

\begin{document}

\newtheorem{theorem}{Theorem}
\newtheorem{lemma}{Lemma}
\newtheorem{observation}{Observation}
\newtheorem{corollary}{Corollary}
\newtheorem{claim}{Claim}
\newtheorem{conjecture}{Conjecture}

\maketitle

\begin{abstract}

The ratio of the connected domination number, $\gamma_c$, and the domination number, $\gamma$, is strictly bounded from above by 3.
It was shown by Zverovich that for every connected $(P_5,C_5)$-free graph, $\gamma_c = \gamma$.

In this paper, we investigate the interdependence of $\gamma$ and $\gamma_c$ in the class of $(P_k,C_k)$-free graphs, for $k \ge 6$.
We prove that for every connected $(P_6,C_6)$-free graph, $\gamma_c \le \gamma + 1$ holds, and there is a family of $(P_6,C_6)$-free graphs with arbitrarily large values of $\gamma$ attaining this bound.
Moreover, for every connected $(P_8,C_8)$-free graph, $\gamma_c / \gamma \le 2$, and there is a family of $(P_7,C_7)$-free graphs with arbitrarily large values of $\gamma$ attaining this bound.
In the class of $(P_9,C_9)$-free graphs, the general bound $\gamma_c / \gamma \le 3$ is asymptotically sharp.

\noindent \textbf{keywords:} domination, connected domination, forbidden induced subgraphs.

\noindent \textbf{MSC:} 05C69, 05C75, 05C38.
\end{abstract}

\section{Introduction}

A \textit{dominating set} of a graph $G$ is a vertex subset $X$ such that every vertex not in $X$ has a neighbor in $X$.
The minimum size of a dominating set of $G$ is called the \textit{domination number} of $G$ and is denoted by $\gamma(G)$.
A dominating set of size $\gamma(G)$ is called a \textit{minimum dominating set}.

Dominating sets have been intensively studied in the literature.
The main interest in dominating sets is due to their relevance on both theoretical and practical side.
Moreover, there are interesting variants of domination and many of them are well-studied.
A good introduction into the topic is given by Haynes, Hedetniemi and Slater \cite{haynes}.

A \textit{connected dominating set} of a graph $G$ is a dominating set $X$ whose induced subgraph, henceforth denoted $G[X]$, is connected.
The minimum size of such a set of a connected graph $G$, the \textit{connected domination number} of $G$, is denoted by $\gamma_c(G)$.
A connected dominating set of size $\gamma_c(G)$ is called a \textit{minimum connected dominating set}.
A connected dominating set such that every proper subset is not a connected dominating set is called \textit{minimal connected dominating set}.
Among the applications of connected dominating sets is the routing of messages in mobile ad-hoc networks. 
Blum, Ding, Thaeler and Cheng \cite{MANETS} explain the usefulness of connected dominating sets in this context.

A first impression of the relation of $\gamma_c$ and $\gamma$ is given by Duchet and Meyniel \cite{Basic_inequality}.

\begin{observation}[Duchet and Meyniel \cite{Basic_inequality}] \label{obs basic inequality}
For every connected graph it holds that $\gamma_c \le 3 \gamma - 2$.
\end{observation}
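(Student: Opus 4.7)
The plan is to start from a minimum dominating set $D$ of size $\gamma$ and augment it with few vertices until it becomes connected. Let $C_1,\ldots,C_k$ be the connected components of $G[D]$. Since $k\le \gamma$, it suffices to show that we can connect all components by adding at most $2(k-1)$ vertices, which yields $\gamma_c \le \gamma + 2(k-1) \le \gamma + 2(\gamma-1) = 3\gamma-2$.

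The core step is an iterative merging procedure. Maintain a set $S \supseteq D$, initially $S := D$, and repeat the following while $G[S]$ is disconnected: pick two components $A,B$ of $G[S]$ whose distance in $G$ is minimum, and add to $S$ the internal vertices of a shortest $A$-$B$ path. Each iteration strictly reduces the number of components of $G[S]$ by at least one, so after at most $k-1$ iterations $G[S]$ is connected. Since $S$ contains $D$ it remains dominating throughout, so the final $S$ is a connected dominating set.

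The key lemma is that as long as $G[S]$ is disconnected, some pair of components of $G[S]$ is at distance at most $3$ in $G$; this bounds each merge by at most $2$ added vertices. To prove it, assume two components $A,B$ realize the minimum distance and suppose for contradiction this distance is $\ge 4$. Take a shortest $A$-$B$ path $v_0 v_1 v_2 v_3 v_4 \cdots$. Then $v_2 \notin S$ (otherwise $v_2$ would belong to some component of $G[S]$ at distance $2$ from $A$, contradicting minimality). Since $D \subseteq S$ is dominating, $v_2$ has a neighbor $u \in D$, and $u$ lies in some component $C$ of $G[S]$. If $C \ne A$ then $A$ and $C$ are at distance at most $3$ via $v_0v_1v_2u$; if $C = A$, then $u v_2 v_3 v_4$ is a walk of length $3$ from $A$ to $B$, contradicting the assumed distance $\ge 4$. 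Either way we reach a contradiction.

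The rest is straightforward bookkeeping: summing over the at most $k-1$ merges gives $|S|\le |D|+2(k-1)\le 3\gamma-2$. The main obstacle is the distance-at-most-$3$ lemma; everything else is counting. I do not expect any difficulty in the base case $\gamma=1$, where $D$ is already a connected dominating set of size $1 \le 3\gamma-2 = 1$.
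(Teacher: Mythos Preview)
The paper does not actually prove this observation; it is merely stated with a citation to Duchet and Meyniel. Your argument is the standard one and is essentially correct.

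There is one small slip in the case $C=A$ of your key lemma. You write that $u\,v_2\,v_3\,v_4$ is a walk of length $3$ from $A$ to $B$, but $v_4\in B$ only when the $A$--$B$ distance is exactly $4$; for larger distances $v_4$ need not lie in $B$. The fix is immediate: if $\ell=d(A,B)\ge 4$ and $u\in A$, then $u,v_2,v_3,\ldots,v_\ell$ is an $A$--$B$ walk of length $\ell-1$, contradicting the minimality of $\ell$. With that correction the argument goes through, and the bookkeeping $|S|\le |D|+2(k-1)\le 3\gamma-2$ is exactly as you say.
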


As an immediate consequence of Observation \ref{obs basic inequality},
\begin{equation} \label{eqn general poc}
 \gamma_c / \gamma < 3.
\end{equation}
Loosely speaking, the price of connectivity for minimum dominating sets, $ \gamma_c / \gamma $,  is strictly bounded by 3.

Let $P_k$ be the induced path on $k$ vertices and let $C_k$ be the induced cycle on $k$ vertices.
It is easy to see that
\begin{equation}\label{eqn paths and cycles}
\lim\limits_{k \rightarrow \infty} \gamma_c (P_k) / \gamma (P_k) = 3 = \lim\limits_{k \rightarrow \infty} \gamma_c (C_k) / \gamma (C_k).
\end{equation}
Hence, the upper bound (\ref{eqn general poc}) is asymptotically sharp in the class of paths and in the class of cycles.

The price of connectivity has been introduced by Cardinal and Levy~\cite{cardinal-levy,levy} for the vertex cover problem. They showed that it was bounded by $2/(1+\varepsilon) $ in graphs with average degree $\varepsilon n$, where $n$ is the number of vertices.
In a companion paper to the present paper, the price of connectivity for vertex cover is studied by Camby, Cardinal, Fiorini and Schaudt~\cite{CCFS}.
In a similar spirit, Schaudt~\cite{schaudt} studied the ratio between the connected domination number and the total domination number. 
Fulman~\cite{fulman} and Zverovich~\cite{zverovich} investigated the ratio between the independence number and the upper domination number. 
Many results in this area concern graph classes defined by forbidden induced subgraphs.
This line of research stems from the classical theory of perfect graphs, for which the clique number and the chromatic number are equal in every induced subgraph~\cite{golumbic}.

Motivated by (\ref{eqn paths and cycles}), in this paper we study the interdependence of $\gamma_c$ and $\gamma$ in graph classes defined by forbidden induced paths and cycles.
For this we use the following standard notation.
If $G$ and $H$ are two graphs, we say that $G$ is \textit{$H$-free} if $H$ does not appear as an induced subgraph of $G$.
Furthermore, if $G$ is $H_1$-free and $H_2$-free for some graphs $H_1$ and $H_2$, we say that $G$ is \textit{$(H_1,H_2)$-free}. 
Our starting point is the following result by Zverovich \cite{Connected-Dominant}.

\begin{theorem}[Zverovich \cite{Connected-Dominant}] \label{thm zverovich}
The following assertions are equivalent for every graph $G$.
\begin{enumerate}[(i)]
 \item For every connected induced subgraph of $G$ it holds that $\gamma_c = \gamma$.

 \item $G$ is $(P_5,C_5)$-free.
\end{enumerate}
\end{theorem}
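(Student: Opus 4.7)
The direction (i) $\Rightarrow$ (ii) is immediate by contraposition: each of $P_5$ and $C_5$ is itself connected, appears as an induced subgraph of itself, and satisfies $\gamma = 2 < 3 = \gamma_c$. So the presence of either as an induced subgraph of $G$ already exhibits a connected induced subgraph violating (i).

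For (ii) $\Rightarrow$ (i), since $(P_5, C_5)$-freeness is hereditary, it is enough to show that every connected $(P_5, C_5)$-free graph $H$ satisfies $\gamma_c(H) = \gamma(H)$. The plan is to pick a minimum dominating set $D$ of $H$ that minimizes the number of components of $H[D]$, and argue this number must be one. Assume for contradiction $H[D]$ has $k \ge 2$ components, and let $D_1, D_2$ be two of them attaining the minimum graph-distance $\ell$ in $H$, realized by a shortest path $x_0 x_1 \ldots x_\ell$ with $x_0 \in D_1$, $x_\ell \in D_2$. This path is induced; no internal vertex lies in $D$; and by the minimality of $\ell$, for every internal $x_i$ with $1 \le i \le \ell - 1$, each neighbor of $x_i$ inside $D$ lies in $D_1$ when $i \le \ell - 2$ and in $D_2$ when $i \ge 2$ (otherwise the partial path through that neighbor realizes a strictly smaller inter-component distance). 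Since $D$ must dominate every internal $x_i$, no such vertex can satisfy $2 \le i \le \ell - 2$; hence $\ell \le 3$ (alternatively, $\ell \ge 4$ exhibits the forbidden induced $P_5 = x_0 x_1 x_2 x_3 x_4$ directly).

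In the case $\ell = 2$, the vertex $x_1$ is a common non-$D$ neighbor of $D_1$ and $D_2$, and I would analyze the replacement $D' = (D \setminus \{x_0\}) \cup \{x_1\}$. If $D'$ dominates, then the edge $x_1 x_2$ merges the component containing $x_1$ with $D_2$, which, together with the tie-break on inter-component distance, contradicts the component-minimal choice of $D$. If $D'$ does not dominate, then some private neighbor $p$ of $x_0$ with $p \ne x_0$ is not adjacent to $x_1$ and, by privateness, is adjacent to no other vertex of $D$. Picking a private neighbor $p'$ of $x_2$ analogously, the five vertices $p, x_0, x_1, x_2, p'$ induce either a $P_5$ (if $p$ and $p'$ are non-adjacent) or a $C_5$ (if they are adjacent), contradicting (ii). The absence of every other chord uses only that private neighbors of $w \in D$ see no vertex of $D \setminus \{w\}$, and that no edge runs between $D_1$ and $D_2$.

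The case $\ell = 3$ is the main obstacle. Starting from the induced $P_4 = x_0 x_1 x_2 x_3$, I would extend it on the right by a $D_2$-neighbor of $x_3$ distinct from $x_3$ (when one exists) or a private neighbor of $x_3$, and symmetrically on the left using a neighbor of $x_0$ in $D_1$ or a private neighbor of $x_0$. The inter-component non-edges and the privateness restriction eliminate most chord candidates automatically. What remains is a careful sub-case analysis on whether $x_1$'s (respectively $x_2$'s) only neighbor in $D$ is $x_0$ (respectively $x_3$), and whether the chosen fifth vertex happens to be adjacent to $x_1$ or $x_2$; in each sub-case the candidate five-vertex subgraph is shown to be either an induced $P_5$ or, when exactly one inconvenient chord is unavoidable, an induced $C_5$. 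This bookkeeping, which simultaneously reveals why exactly the two graphs $P_5$ and $C_5$ must be forbidden, is the delicate step of the argument.
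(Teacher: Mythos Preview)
The paper does not give its own proof of this theorem: it is quoted as a known result of Zverovich and used only as motivation. There is therefore nothing in the paper to compare your argument against line by line. That said, the machinery the paper develops for its own theorems specializes to a very short proof of the direction (ii)~$\Rightarrow$~(i): Lemma~\ref{lem short paths in CDS} with $k=5$ says that every minimal connected dominating set $X$ of a connected $(P_5,C_5)$-free graph induces a $P_3$-free, hence complete, subgraph. Running the set-up of the proof of Theorem~\ref{thm P6 C6 free} (take a minimum dominating set $D$, a minimal connector $C$, a minimal connected dominating set $X\subseteq D\cup C$, and form $S$) one lands immediately in the ``$G[S]$ connected'' branch, giving $\gamma_c\le |S|\le \gamma$. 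Your plan is a genuinely different, self-contained route: a shortest-path argument between two components of a component-minimal minimum dominating set, followed by a local swap/private-neighbour analysis.

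Your direction (i)~$\Rightarrow$~(ii) is fine. For (ii)~$\Rightarrow$~(i), the reduction to $\ell\le 3$ is clean, and the five-vertex $P_5/C_5$ construction in the $\ell=2$ case when \emph{both} swaps fail to dominate is correct. There are, however, two real gaps. First, in the $\ell=2$ branch where $D'=(D\setminus\{x_0\})\cup\{x_1\}$ \emph{is} dominating, you assert a contradiction to component-minimality, but removing $x_0$ can disconnect $D_1$ into several pieces that $x_1$ does not touch, so $H[D']$ may have as many or more components than $H[D]$; the unstated ``tie-break on inter-component distance'' does not obviously rescue this, and you should either specify a working secondary invariant or handle this sub-case differently. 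Second, the $\ell=3$ case is only sketched: the obstruction there is precisely that the natural fifth vertex (a $D_1$-neighbour of $x_0$, or a private neighbour of $x_0$) may be adjacent to $x_1$, and you have not shown how the sub-case analysis closes. As written, the plan is plausible but not yet a proof.
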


We aim for similar bounds in the class of $(P_k,C_k)$-free connected graphs for $k \ge 6$.
The properties of connected dominating sets in $P_k$-free graphs have been studied before, e.g.~by Liu, Peng and Zhao \cite{LPZ} and later van 't Hof and Paulusma \cite{PP}.

We prove the following.
\begin{itemize}
	\item For connected $(P_6,C_6)$-free graphs, $\gamma_c \le \gamma + 1$, and this bound is attained by connected $(P_6,C_6)$-free graphs with arbitrarily large values of $\gamma$.

	\item For connected $(P_8,C_8)$-free graphs, $\gamma_c \le 2 \gamma$, and this bound is attained by connected $(P_7,C_7)$-free graphs with arbitrarily large values of $\gamma$. 
(In particular, the bound $\gamma_c \le 2 \gamma$ is best possible also in the class of connected $(P_7,C_7)$-free graphs.)

	\item For connected $(P_9,C_9)$-free graphs, the general bound $\gamma_c / \gamma < 3$ is asymptotically sharp.
\end{itemize}

Apart from the previous work, the research of this paper has an algorithmic motivation.
The proofs of our results are constructive in the sense that it is possible to draw polynomial time algorithms from them.
These algorithms can be used to build, given a dominating set of size $k$, a connected dominating set of size at most $f(k)$, for the suitable function $f$ provided by the respective theorem.
We do not explicitely give the algorithms in this paper, but leave it as a possible future application of our results.

\section{Our Results}

We will use the following lemma several times.
Note that this lemma is concerned with minimal connected dominating sets.

\begin{lemma} \label{lem short paths in CDS}
Let $G$ be a connected graph that is $(P_k,C_k)$-free for some $k \ge 4$ and let $X$ be a minimal connected dominating set of $G$.
Then $G[X]$ is $P_{k-2}$-free.
\end{lemma}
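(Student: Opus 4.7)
The plan is to argue by contradiction: suppose $G[X]$ contains an induced $P_{k-2}$, and let $P = p_1 p_2 \cdots p_\ell$ be a \emph{longest} induced path in $G[X]$, so $\ell \geq k-2$. The immediate first observation is that $\ell \leq k-1$, for otherwise the prefix $p_1 \cdots p_k$ is already an induced $P_k$ in $G$, contradicting $(P_k,C_k)$-freeness. Thus $\ell \in \{k-2, k-1\}$.

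The key step---and what I expect to be the main obstacle---is to show that each endpoint of $P$ has a \emph{private neighbor} outside $X$, meaning a vertex $u \notin X$ with $N(u) \cap X = \{p_1\}$ (and analogously for $p_\ell$). The minimality of $X$ gives a dichotomy for $p_1$: either $X \setminus \{p_1\}$ fails to dominate $V(G)$, or $p_1$ is a cut vertex of $G[X]$. Since $p_1$ is itself dominated by $p_2 \in X \setminus \{p_1\}$ (here I use $\ell \geq k-2 \geq 2$, which uses $k \geq 4$), a failure of domination must come from an external vertex whose only neighbor in $X$ is $p_1$---that is, a private neighbor. The remaining case, $p_1$ being a cut vertex, is exactly where the choice of $P$ as a \emph{longest} induced path is crucial: picking any neighbor $w$ of $p_1$ in a component of $G[X] - p_1$ different from the one containing $p_2$, I would obtain the induced path $w p_1 p_2 \cdots p_\ell$ (inducedness coming from the fact that the two sides of the cut have no edges between them), violating maximality of $P$. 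Hence this case does not occur, and by symmetry $p_\ell$ also has a private neighbor $u_\ell$.

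Finally, I assemble the contradiction. The private neighbors $u_1$ and $u_\ell$ are distinct, since $u_1 = u_\ell$ would force $p_\ell \in N(u_1) \cap X = \{p_1\}$ and hence $p_1 = p_\ell$, which is impossible as $\ell \geq 2$. If $\ell = k-1$, then $u_1 p_1 p_2 \cdots p_\ell$ is already an induced $P_k$ in $G$ (the privacy condition ensures $u_1$ has no neighbor among $p_2, \ldots, p_\ell$). If $\ell = k-2$, the vertex sequence $u_1 p_1 p_2 \cdots p_\ell u_\ell$ has no chord except possibly $u_1 u_\ell$, by the privacy condition applied to both endpoints, so it yields either an induced $P_k$ (if $u_1 u_\ell \notin E(G)$) or an induced $C_k$ (if $u_1 u_\ell \in E(G)$). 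Each outcome contradicts $(P_k,C_k)$-freeness, completing the proof.
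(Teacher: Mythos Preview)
Your proof is correct and follows essentially the same approach as the paper: both argue by contradiction and extend an induced $P_{k-2}$ in $G[X]$ at each endpoint, using the minimality of $X$, to produce an induced $P_k$ or $C_k$. The one minor difference is in the cut-vertex case---the paper handles it directly by taking a neighbor of $v_1$ in $X$ on the other side of the cut (which already has $v_1$ as its only neighbor on the path), whereas you take a \emph{longest} induced path so that the cut-vertex case is ruled out by maximality; this forces you to treat $\ell=k-1$ separately, but the underlying idea is identical.
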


\begin{proof}
Let $G=(V,E)$ be a connected $(P_k,C_k)$-free graph and let $X$ be a minimal connected dominating set of $G$.
Suppose that there is an induced path $(v_1,v_2, \ldots , v_{k-2})$ on $k-2$ vertices in $G[X]$.
As $X$ is minimal, $X \setminus \{v_1\}$ is not a connected dominating set.
Hence, $X \setminus \{v_1\}$ is not a dominating set or $G[X \setminus \{v_1\}]$ is disconnected. 
In the first case, there is a vertex $v_1' \in V \setminus X$ whose only neighbor in $X$ is $v_1$.
In the second case, the vertices $v_2, \ldots , v_{k-2}$ are contained in a single connected component of $G[X \setminus \{v_1\}]$.
Thus, there is a neighbor of $v_1$ in $X$, say $v_1'$, that is not adjacent to any member of $\{v_2, \ldots , v_{k-2}\}$.
In both cases, there is a vertex $v_1' \notin \{v_1, v_2, \ldots , v_{k-2}\}$ whose only neighbor among $\{v_1, v_2, \ldots , v_{k-2}\}$ is $v_1$.
Similarly, there is a vertex $v_{k-2}' \notin \{v_1, v_2, \ldots , v_{k-2}\}$ whose only neighbor among $\{v_1, v_2, \ldots , v_{k-2}\}$ is $v_{k-2}$.
But then $G[\{v_1',v_1, v_2, \ldots , v_{k-2},v_{k-2}'\}] \cong P_k$ or $G[\{v_1',v_1, v_2, \ldots , v_{k-2},v_{k-2}'\}] \cong C_k$, depending on the adjacency of $v_1'$ and $v_{k-2}'$.
This is a contradiction to the choice of $G$.
\end{proof}

For a graph $G$ and $v \in V(G)$ we denote by $N_G[v]$ the closed neighborhood of $v$ in $G$.
Our first result establishes the upper bound $\gamma_c \le \gamma + 1$ in the class of connected $(P_6,C_6)$-free graphs.

\begin{theorem}\label{thm P6 C6 free}
For every connected $(P_6,C_6)$-free graph it holds that $\gamma_c \le \gamma + 1$.
\end{theorem}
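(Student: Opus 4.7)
The plan is to prove the following equivalent statement: for every minimum dominating set $D$ of $G$, there exists a single vertex $w$ such that $D\cup\{w\}$ is a connected dominating set. Since $|D|=\gamma$, the resulting set has size at most $\gamma+1$, which gives $\gamma_c\le\gamma+1$.

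Fix a minimum dominating set $D$. If $G[D]$ is connected, $D$ itself is a CDS and there is nothing to prove. Otherwise let $C_1,\ldots,C_p$ (with $p\ge 2$) be the components of $G[D]$. The key claim is that there exists $w\in V\setminus D$ adjacent in $G$ to at least one vertex of each $C_i$; granting this, $D\cup\{w\}$ is the desired CDS of size $\gamma+1$. To prove the claim, I would proceed by contradiction, assuming no such $w$ exists. For any two components $C_i,C_j$, consider a shortest $C_i$--$C_j$ path in $G$. Its interior vertices lying in $D$ must belong to a distinct component $C_k$ (with $k\ne i,j$), while interior vertices in $V\setminus D$ have their $D$-neighborhoods tightly constrained by the shortest-path property. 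A case analysis, naturally carried out on the pair of components realizing the minimum distance, would show that some pair $C_i,C_j$ lies at distance exactly $3$ in $G$, realized by a shortest path $c_i-u_1-u_2-c_j$ with $u_1,u_2\in V\setminus D$.

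Minimality of $D$ now supplies the remaining ingredients. Since $D\setminus\{c_i\}$ is not a dominating set, either $C_i=\{c_i\}$ is a singleton---in which case $u_1$ itself plays the role of a ``private neighbor'' of $c_i$, its $D$-neighborhood being exactly $\{c_i\}$ by the shortest-path constraint---or there exists $c_i^{\star}\in V\setminus D$ whose unique $D$-neighbor is $c_i$. Analogously for $c_j$ one obtains a vertex $c_j^{\star}$. The six vertices $\{c_i^{\star},c_i,u_1,u_2,c_j,c_j^{\star}\}$ then induce a $P_6$ if $c_i^{\star}\not\sim c_j^{\star}$ and a $C_6$ if $c_i^{\star}\sim c_j^{\star}$, both contradicting the $(P_6,C_6)$-freeness of $G$.

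The main obstacle is ruling out spurious chords within this six-vertex configuration. Chords such as $c_i^{\star}\sim c_j$, $c_j^{\star}\sim c_i$, $c_i\sim u_2$, or $u_1\sim c_j$ are directly excluded by either the privacy of $c_i^{\star},c_j^{\star}$ or the shortest-path assumption. The more delicate chords are $c_i^{\star}\sim u_1$, $c_j^{\star}\sim u_2$, $c_i^{\star}\sim u_2$, and $u_1\sim c_j^{\star}$; these need to be ruled out either by re-choosing the private neighbor among its possibly many candidates or by local structural arguments that combine the minimality of $D$ with shortest-path constraints. The degenerate cases arising when $C_i$ or $C_j$ is a singleton require additional bookkeeping but are handled by treating $u_1$ or $u_2$ themselves as private neighbors, effectively reducing to the non-singleton case.
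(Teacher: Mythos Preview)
Your ``equivalent statement'' is strictly stronger than the theorem and is, in fact, false. The paper's own tightness family $F_k$ provides counterexamples. Take $F_k$ (for $k\ge 2$): glue $k$ copies of $C_4$ at a common vertex $x$ and attach a pendant path $x\!-\!p_1\!-\!p_2$. One checks that the unique minimum dominating set (up to symmetry) is $D=\{b_1,\ldots,b_k,p_1\}$, where $b_i$ is the vertex of the $i$-th $C_4$ antipodal to $x$. The components of $G[D]$ are all singletons, and no vertex of $F_k$ is adjacent to more than one $b_i$ (the neighbours of $b_i$ are the two ``side'' vertices $a_i,c_i$ of its own $C_4$). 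Hence there is no single vertex $w$ with $D\cup\{w\}$ connected. The paper even states this explicitly: $F_k$ has no minimum connected dominating set containing a minimum dominating set.

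You can also see where your contradiction argument breaks on this example. With $c_i=b_1$, $c_j=p_1$, the shortest path is $b_1\!-\!a_1\!-\!x\!-\!p_1$, so $u_1=a_1$, $u_2=x$. The only private neighbours of $b_1$ are $a_1$ and $c_1$; taking $c_i^{\star}=c_1$ introduces the chord $c_1\sim x=u_2$, while taking $c_i^{\star}=a_1=u_1$ collapses your six-vertex set to five. Your ``degenerate singleton'' remark does not rescue this: with $c_i^{\star}=u_1$ you simply do not have six distinct vertices, so no $P_6$ or $C_6$ can arise. The ``delicate chords'' you flag are not removable side issues; they are exactly where the argument fails.

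The paper's proof is genuinely different. It does not try to augment $D$ by one vertex. Instead it passes to a minimal connected dominating set $X\subseteq D\cup C$, invokes Lemma~\ref{lem short paths in CDS} to conclude $G[X]$ is $P_4$-free, and then uses the cograph structure (Seinsche's theorem) to find a single vertex of $X$ adjacent to a carefully chosen transversal $S$ of the components. The extra ``$+1$'' comes from that vertex, not from connecting $D$ directly.
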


\begin{proof}
Let $G = (V,E)$ be a connected $(P_6,C_6)$-free graph and let $D$ be a minimum dominating set of $G$.
Let $D_1, D_2, \ldots, D_k$ be the vertex sets of the connected components of $G[D]$.
Let $C \subseteq V$ be an inclusionwise minimal set such that $G[D \cup C]$ is connected, and let $X \subseteq D \cup C$ be a minimal connected dominating set.
By Lemma \ref{lem short paths in CDS}, $G[X]$ is $P_4$-free.

Let $I \subseteq \{1,2,\ldots,k\}$ be such that $i \in I$ if and only if $D_i \cap X = \emptyset$.
For every $i \in I$, pick $x_i \in X$ such that $x_i$ has a neighbor in $D_i$ (this is always possible, since $X$ is a dominating set).
The $x_i$ do not have to be distinct.
Let $S =  \bigcup_{i \notin I} (D_i \cap X) \cup \{x_j : j \in I\}$.

Assume first that $G[S]$ is connected.
Hence, $G[D \cup \{x_i : i \in I\}]$ is connected, and so $C = \{x_i : i \in I\}$.
Thus, $X = S$, which gives 
\[
\gamma_c \le |S| \le \sum_{i \notin I} |D_i \cap X| + |I| \le |D| = \gamma.
\]

So we may assume that $G[S]$ is not connected.
Among other authors, Seinsche~\cite{Seinsche} proved that every $P_4$-free graph with at least two vertices is either disconnected, or its complement is disconnected.
In particular, this applies to $G[X]$.
Since the complement of $G[S]$ is connected, but the complement of $G[X]$ is disconnected, there is a vertex $y \in X$ adjacent to every member of $S$.
Hence, $G[D \cup \{x_i : i \in I\} \cup \{y\}]$ is connected, and so $C = \{x_i : i \in I\} \cup \{y\}$.
Thus, $X = S \cup \{y\}$, which gives 
\[
\gamma_c \le |S| \le \sum_{i \notin I} |D_i \cap X| + |I| + 1 \le |D| +1 = \gamma +1.
\]
This completes the proof.
\end{proof}

To see that the bound given by Theorem \ref{thm P6 C6 free} is best possible, consider the following family of connected $(P_6,C_6)$-free graphs.
For each $k \in \mathbb{N}$, let $F_k$ be the graph obtained from a $k$ disjoint copies of $C_4$, by picking one vertex from every copy and identifying these picked vertices to a single vertex $x$, and afterwards attaching a path of length 2 to $x$ (see Fig.~\ref{fig P6 C6}).
It is easy to see that, for all $k$, $\gamma_c(F_k)=\gamma(F_k)+1$.
Moreover, the graph $F_k$ does not have a minimum connected dominating set that contains a minimum dominating set as subset.

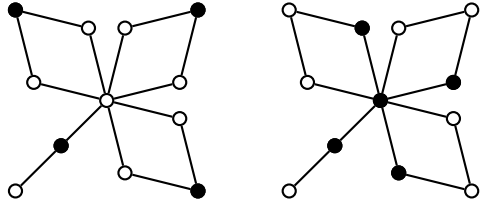
\begin{figure}[ht]
\begin{center}
\psset{unit=1.2cm}
\begin{pspicture}(0,0)(5,2)

\cnode(0,0){0.1cm}{a1}
\cnode(0.5,0.5){0.1cm}{a2}

\pscircle*(0.5,0.5){0.1cm}

\cnode(0,2){0.1cm}{b1}
\cnode(0.8,1.8){0.1cm}{b2}
\cnode(0.2,1.2){0.1cm}{b3}

\pscircle*(0,2){0.1cm}

\cnode(2,2){0.1cm}{c1}
\cnode(1.8,1.2){0.1cm}{c2}
\cnode(1.2,1.8){0.1cm}{c3}

\pscircle*(2,2){0.1cm}

\cnode(2,0){0.1cm}{d1}
\cnode(1.2,0.2){0.1cm}{d2}
\cnode(1.8,0.8){0.1cm}{d3}

\pscircle*(2,0){0.1cm}

\cnode(1,1){0.1cm}{x}

\ncarc[arcangle=0]{-}{a1}{a2}
\ncarc[arcangle=0]{-}{a2}{x}

\ncarc[arcangle=0]{-}{b1}{b2}
\ncarc[arcangle=0]{-}{b1}{b3}
\ncarc[arcangle=0]{-}{b2}{x}
\ncarc[arcangle=0]{-}{b3}{x}

\ncarc[arcangle=0]{-}{c1}{c2}
\ncarc[arcangle=0]{-}{c1}{c3}
\ncarc[arcangle=0]{-}{c2}{x}
\ncarc[arcangle=0]{-}{c3}{x}

\ncarc[arcangle=0]{-}{d1}{d2}
\ncarc[arcangle=0]{-}{d1}{d3}
\ncarc[arcangle=0]{-}{d2}{x}
\ncarc[arcangle=0]{-}{d3}{x}

\cnode(3,0){0.1cm}{a1}
\cnode(3.5,0.5){0.1cm}{a2}

\pscircle*(3.5,0.5){0.1cm}

\cnode(3,2){0.1cm}{b1}
\cnode(3.8,1.8){0.1cm}{b2}
\cnode(3.2,1.2){0.1cm}{b3}

\pscircle*(3.8,1.8){0.1cm}

\cnode(5,2){0.1cm}{c1}
\cnode(4.8,1.2){0.1cm}{c2}
\cnode(4.2,1.8){0.1cm}{c3}

\pscircle*(4.8,1.2){0.1cm}

\cnode(5,0){0.1cm}{d1}
\cnode(4.2,0.2){0.1cm}{d2}
\cnode(4.8,0.8){0.1cm}{d3}

\pscircle*(4.2,0.2){0.1cm}

\cnode(4,1){0.1cm}{x}

\pscircle*(4,1){0.1cm}

\ncarc[arcangle=0]{-}{a1}{a2}
\ncarc[arcangle=0]{-}{a2}{x}

\ncarc[arcangle=0]{-}{b1}{b2}
\ncarc[arcangle=0]{-}{b1}{b3}
\ncarc[arcangle=0]{-}{b2}{x}
\ncarc[arcangle=0]{-}{b3}{x}

\ncarc[arcangle=0]{-}{c1}{c2}
\ncarc[arcangle=0]{-}{c1}{c3}
\ncarc[arcangle=0]{-}{c2}{x}
\ncarc[arcangle=0]{-}{c3}{x}

\ncarc[arcangle=0]{-}{d1}{d2}
\ncarc[arcangle=0]{-}{d1}{d3}
\ncarc[arcangle=0]{-}{d2}{x}
\ncarc[arcangle=0]{-}{d3}{x}

\end{pspicture}
\end{center}
\caption{The black vertices indicate a minimum dominating set (resp.~a minimum connected dominating set) of $F_3$.}
\label{fig P6 C6}
\end{figure}

\begin{theorem}\label{thm P8C8 free}
For every connected $(P_8,C_8)$-free graph it holds that $\gamma_c / \gamma \le 2$.
\end{theorem}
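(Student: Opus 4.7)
We adapt the strategy of the proof of Theorem~\ref{thm P6 C6 free}. Let $D$ be a minimum dominating set of $G$, let $D_1,\ldots,D_k$ be the components of $G[D]$, let $C\subseteq V\setminus D$ be inclusion-minimal such that $G[D\cup C]$ is connected, and let $X\subseteq D\cup C$ be a minimal connected dominating set. By Lemma~\ref{lem short paths in CDS} applied with $k=8$, the graph $G[X]$ is $P_6$-free.

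Following the argument for Theorem~\ref{thm P6 C6 free}, set $I=\{i : D_i\cap X=\emptyset\}$, choose for each $i\in I$ a vertex $x_i\in X$ with a neighbor in $D_i$, and put
\[
S\;=\;\bigcup_{i\notin I}(D_i\cap X)\;\cup\;\{x_j:j\in I\}.
\]
The same counting as in Theorem~\ref{thm P6 C6 free} gives $|S|\le|D|$. If $G[S]$ is connected, the same reasoning forces $X=S$ and we immediately obtain $|X|\le|D|\le 2|D|$. The substantive case is when $G[S]$ is disconnected: the aim is then to show that $|X\setminus S|\le|D|$, which would give $|X|=|S|+|X\setminus S|\le 2|D|$.

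Establishing $|X\setminus S|\le|D|$ is the heart of the proof and is where the main obstacle lies. In Theorem~\ref{thm P6 C6 free}, Seinsche's theorem on $P_4$-free graphs furnishes, in one stroke, a single universal vertex $y\in X$ adjacent to every element of $S$; here $G[X]$ is only guaranteed to be $P_6$-free (hence of diameter at most $4$), and no such universal vertex need exist. The plan is instead to exploit the minimality of $X$, namely that every vertex of $X$ is either a cut vertex of $G[X]$ or has a private neighbor in $V\setminus X$, in order to charge each bridging vertex $x\in X\setminus S$ to a component $D_i$ of $G[D]$ via a shortest $(x,S)$-path in $G[X]$. A case analysis, generalizing the one in Theorem~\ref{thm P6 C6 free}, is used to show that each component $D_i$ absorbs at most $|D_i|$ bridging vertices, so that the charges sum to at most $|D|$. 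The $C_8$-freeness of $G$ is expected to play an essential role here: it rules out the ``cyclic'' bridging configurations that would otherwise allow too many vertices of $X\setminus S$ to concentrate around a single component, while $P_8$-freeness controls the linear configurations and works in concert with the $P_6$-freeness of $G[X]$ provided by Lemma~\ref{lem short paths in CDS}.
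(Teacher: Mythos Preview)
Your write-up is a strategy outline, not a proof: the entire substance of the disconnected-$S$ case is deferred to an unexecuted ``charging argument'' and an unspecified ``case analysis.'' Phrases such as ``the plan is instead to\ldots'' and ``$C_8$-freeness of $G$ is \emph{expected} to play an essential role'' make this explicit. No mechanism is given for how a bridging vertex $x\in X\setminus S$ is assigned to a component $D_i$, nor any reason why each $D_i$ can absorb at most $|D_i|$ such vertices; without this, the claimed bound $|X\setminus S|\le |D|$ is simply an unproved assertion. Note also that minimality of $X$ does not in general give each vertex of $X$ a private neighbour in $V\setminus X$; it may instead be a cut vertex of $G[X]$, and your sketch does not say how such vertices are handled by the charging.

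The paper's proof proceeds quite differently and does not attempt an ad hoc charging scheme. It first disposes of the case $\gamma\le 3$ by a short direct argument (a step your outline omits). For $\gamma\ge 4$ it invokes a structural result of van~'t~Hof and Paulusma on connected $P_6$-free graphs: every such graph has a connected dominating set that is either an induced $C_6$ or has a spanning complete bipartite subgraph. Applied to $G[X]$ (which is $P_6$-free by Lemma~\ref{lem short paths in CDS}), this yields a highly structured set $Y\subseteq X$ through which one routes a single connecting vertex $y_i$ per component $D_i$, giving $|X|\le |S|+k\le 2|D|$. If you wish to avoid the van~'t~Hof--Paulusma tool, you would need a genuine structural substitute for it; the charging idea as stated does not supply one.
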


\begin{proof}
Let $G = (V,E)$ be a connected $(P_8,C_8)$-free graph and let $D$ be a minimum dominating set of $G$.
Let $D_1, D_2, \ldots, D_k$ be the vertex sets of the connected components of $G[D]$.
It is clear that we can assume that $G[D]$ has at least two components, i.e., $k \ge 2$.
Let $C \subseteq V$ be an inclusionwise minimal set such that $G[D \cup C]$ is connected, and let $X \subseteq D \cup C$ be a minimal connected dominating set.
By Lemma \ref{lem short paths in CDS}, $G[X]$ is $P_6$-free.

Let us first assume that $\gamma \le 3$.
If $k = 2$, $\gamma_c \le \gamma + 2$ and so clearly $\gamma_c / \gamma \le 2$ holds.
So we may assume that $k=3$, i.e., $D$ consists of three isolated vertices, say $x$, $y$, and $z$.
Since $D$ is a dominating set, $|C| \le 4$, and so $|D \cup C| \le 7$.
Suppose that $|X| = 7$, i.e., $X = D \cup C$.
Thus, the distance in $G[X]$ between any two vertices among $x$, $y$, and $z$ is exactly 2.
Let $(x,u,v,y)$ be a shortest path in $G[X]$ between $x$ and $y$, and let $(x,u',v',z)$ be a shortest path in $G[X]$ between $x$ and $z$.
Thus $C = \{u,v,u',v'\}$.
Since $G[X]$ is $P_6$-free, the distance between $v$ and $z$ in $G[X]$ is at most $4$. Thus $v$ or $u$ must be adjacent to a member of $\{u',v',z\}$.
Note that if there is only one adjacency and if this adjacency is between $u$ and $u'$, $G[\{y,v,u,u',v',z\}]$ is an induced $P_6$, a contradiction. In the other cases, $X$ is not minimal, again a contradiction.

Now let us assume that $\gamma \ge 4$.

Let $I \subseteq \{1,2,\ldots,k\}$ be such that $i \in I$ if and only if $D_i \cap X = \emptyset$.
For every $i \in I$, pick $x_i \in X$ such that $x_i$ has a neighbor in $D_i$ (this is always possible, since $X$ is a dominating set).
The $x_i$ do not have to be distinct.
Let $S =  \bigcup_{i \notin I} (D_i \cap X) \cup \{x_j : j \in I\}$.

It is shown by van t' Hof and Paulusma \cite{PP} that every connected $P_6$-free graph has a connected dominating set $Z$ for which the following holds: either $G[Z] \cong C_6$ or $G[Z]$ contains a complete bipartite graph as spanning subgraph.
Let $Y$ be such a connected dominating set for $G[X]$.

Assume first that $G[Y] \cong C_6$.
Let $u_1, u_2 , \ldots , u_6$ be a consecutive ordering of the vertices of the $C_6$.
Suppose that $Y' = \{ u_1 , u_2 , u_3 , u_4 \}$ is not a dominating set of $G[X]$.
Thus there is a vertex $z \in X$ with $N[z] \cap Y' = \emptyset$.
Without loss of generality, $z$ is adjacent to $u_5$.
But then $G[Y' \cap \{u_5 , z\}] \cong P_6$, a contradiction to the fact that $G[X]$ is $P_6$-free.
Thus $Y'$ is a connected dominating set of $G[X]$.

Since $G[D \cup Y' \cup \{x_j : j \in I\}]$ is connected, $C \subseteq Y' \cup \{x_j : j \in I\}$.
Thus, $X \subseteq S \cup Y'$, which gives 
\[
\gamma_c \le |S| + |Y'| \le \sum_{i \notin I} |D_i \cap X| + |I| + 4 \le |D| + 4 \le 2 \gamma.
\]

So we may assume that $G[Y]$ contains a complete bipartite graph as spanning subgraph.
Let $(A,B)$ be a bipartition of this complete bipartite graph.
For each $1 \le i \le k$, pick $y_i \in Y$ with the following property.
If $D_i \cap X \neq \emptyset$, $N_{G[X]}[y_i] \cap D_i \neq \emptyset$, and if $D_i \cap X = \emptyset$, $y_i \in N_{G[X]}[x_i]$.
These $y_i$ exist since $Y$ is a dominating set of $G[X]$.
We can assume that $A \cap \{y_i : 1 \le i \le k\} \neq \emptyset$.

If $B \cap \{y_i : 1 \le i \le k\} \neq \emptyset$, $G[D \cup \{x_i : i \in I\} \cup \{y_j : 1 \le j \le k\}]$ is connected.
So, $C = \{x_i : i \in I\} \cup \{y_j : 1 \le j \le k\}$ and thus $X = S \cup \{y_j : 1 \le j \le k\}$, which gives 
\begin{equation}
\label{eqn:P8C8-free}
\gamma_c \le |S| + k \le \sum_{i \notin I} |D_i \cap X| + |I| + k \le |D| + k \le 2 \gamma.
\end{equation}

So we may assume that $B \cap \{y_i : 1 \le i \le k\} = \emptyset$.
Pick any $z \in B$.
Since $D$ is a dominating set, there is an index $1 \le l \le k$ such that $N[z] \cap D_l \neq \emptyset$.
So, $A \cap \{y_i : 1 \le i \le k, i \neq l\} \neq \emptyset$.
Hence, $G[D \cup \{x_i : i \in I\} \cup \{y_j : 1 \le j \le k, j \neq l\} \cup \{z\}]$ is connected.
So, $C = \{x_i : i \in I\} \cup \{y_j : 1 \le j \le k, j \neq l\} \cup \{z\}$ and thus $X = S \cup \{y_j : 1 \le j \le k, j \neq l\} \cup \{z\}$, which gives (\ref{eqn:P8C8-free}).
This completes the proof.
\end{proof}

The bound provided by Theorem \ref{thm P8C8 free} is attained by an infinite number of connected $(P_7,C_7)$-free graphs, given by the following construction.
For every $k \in \mathbb{N}$, let $H_k$ be the graph defined as follows (cf.~Fig.~\ref{fig P7}).
Start with $k$ paths $P^1$, $P^2$, \ldots , $P^k$ on three vertices each.
For every $1 \le i \le k$, choose an end-vertex $v_i$ of $P^i$.
Let $H_k$ be the graph obtained from the disjoint union of all $P^i$, $1 \le i \le k$, by adding all possible edges between the vertices $v_i$, $1 \le i \le k$.
So, $H_k[\{v_i : 1 \le i \le k\}]$ is a complete graph.
It is easily seen that, for all $k \in \mathbb{N}$, $\gamma_c(H_k) / \gamma(H_k) = 2$.

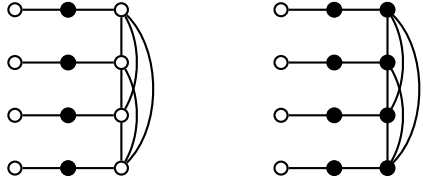
\begin{figure}[ht]
\begin{center}
\psset{unit=0.7cm}
\begin{pspicture}(0,0)(8,3)

\cnode(0,0){0.1cm}{a2}
\cnode(1,0){0.1cm}{a3}
\cnode(2,0){0.1cm}{a4}

\cnode(0,1){0.1cm}{b2}
\cnode(1,1){0.1cm}{b3}
\cnode(2,1){0.1cm}{b4}

\cnode(0,2){0.1cm}{c2}
\cnode(1,2){0.1cm}{c3}
\cnode(2,2){0.1cm}{c4}

\cnode(0,3){0.1cm}{d2}
\cnode(1,3){0.1cm}{d3}
\cnode(2,3){0.1cm}{d4}

\ncarc[arcangle=0]{-}{a2}{a3}
\ncarc[arcangle=0]{-}{a3}{a4}

\ncarc[arcangle=0]{-}{b2}{b3}
\ncarc[arcangle=0]{-}{b3}{b4}

\ncarc[arcangle=0]{-}{c2}{c3}
\ncarc[arcangle=0]{-}{c3}{c4}

\ncarc[arcangle=0]{-}{d2}{d3}
\ncarc[arcangle=0]{-}{d3}{d4}

\ncarc[arcangle=0]{-}{a4}{b4}
\ncarc[arcangle=0]{-}{b4}{c4}
\ncarc[arcangle=0]{-}{c4}{d4}

\ncarc[arcangle=-30]{-}{a4}{c4}
\ncarc[arcangle=-30]{-}{b4}{d4}
\ncarc[arcangle=-45]{-}{a4}{d4}

\pscircle*(1,0){0.15}
\pscircle*(1,1){0.15}
\pscircle*(1,2){0.15}
\pscircle*(1,3){0.15}


\cnode(5,0){0.1cm}{aa2}
\cnode(6,0){0.1cm}{aa3}
\cnode(7,0){0.1cm}{aa4}

\cnode(5,1){0.1cm}{bb2}
\cnode(6,1){0.1cm}{bb3}
\cnode(7,1){0.1cm}{bb4}

\cnode(5,2){0.1cm}{cc2}
\cnode(6,2){0.1cm}{cc3}
\cnode(7,2){0.1cm}{cc4}

\cnode(5,3){0.1cm}{dd2}
\cnode(6,3){0.1cm}{dd3}
\cnode(7,3){0.1cm}{dd4}

\ncarc[arcangle=0]{-}{aa2}{aa3}
\ncarc[arcangle=0]{-}{aa3}{aa4}

\ncarc[arcangle=0]{-}{bb2}{bb3}
\ncarc[arcangle=0]{-}{bb3}{bb4}

\ncarc[arcangle=0]{-}{cc2}{cc3}
\ncarc[arcangle=0]{-}{cc3}{cc4}

\ncarc[arcangle=0]{-}{dd2}{dd3}
\ncarc[arcangle=0]{-}{dd3}{dd4}

\ncarc[arcangle=0]{-}{aa4}{bb4}
\ncarc[arcangle=0]{-}{bb4}{cc4}
\ncarc[arcangle=0]{-}{cc4}{dd4}

\ncarc[arcangle=-30]{-}{aa4}{cc4}
\ncarc[arcangle=-30]{-}{bb4}{dd4}
\ncarc[arcangle=-45]{-}{aa4}{dd4}

\pscircle*(6,0){0.15}
\pscircle*(7,0){0.15}

\pscircle*(6,1){0.15}
\pscircle*(7,1){0.15}

\pscircle*(6,2){0.15}
\pscircle*(7,2){0.15}

\pscircle*(6,3){0.15}
\pscircle*(7,3){0.15}

\end{pspicture}
\end{center}
\caption{The black vertices indicate a minimum dominating set (resp.~a minimum connected dominating set) of $H_4$.}
\label{fig P7}
\end{figure}

A similar construction shows that (\ref{eqn general poc}) is asymptotically sharp in the class of connected $(P_9,C_9)$-free graphs, in the sense that there is a family $\{G_k : k \in \mathbb{N}\}$ of $(P_9,C_9)$-free graphs such that $\lim_{k \rightarrow \infty} \gamma_c (G_k) / \gamma (G_k) = 3$.
For every $k \in \mathbb{N}$ let $G_k$ be the graph obtained by attaching a pendant vertex to every pendant vertex of $H_k$ (cf.~Fig.~\ref{fig P9 C9}).
It is easy to check that for every $k \ge 2$, $\gamma (G_k) = k+1$ and $\gamma_c (G_k) = 3k$.
Furthermore, $G_k$ is $(P_9,C_9)$-free.

\begin{figure}[ht]
\begin{center}
\psset{unit=0.7cm}
\begin{pspicture}(0,0)(10,3)

\cnode(0,0){0.1cm}{a1}
\cnode(1,0){0.1cm}{a2}
\cnode(2,0){0.1cm}{a3}
\cnode(3,0){0.1cm}{a4}

\cnode(0,1){0.1cm}{b1}
\cnode(1,1){0.1cm}{b2}
\cnode(2,1){0.1cm}{b3}
\cnode(3,1){0.1cm}{b4}

\cnode(0,2){0.1cm}{c1}
\cnode(1,2){0.1cm}{c2}
\cnode(2,2){0.1cm}{c3}
\cnode(3,2){0.1cm}{c4}

\cnode(0,3){0.1cm}{d1}
\cnode(1,3){0.1cm}{d2}
\cnode(2,3){0.1cm}{d3}
\cnode(3,3){0.1cm}{d4}

\ncarc[arcangle=0]{-}{a1}{a2}
\ncarc[arcangle=0]{-}{a2}{a3}
\ncarc[arcangle=0]{-}{a3}{a4}

\ncarc[arcangle=0]{-}{b1}{b2}
\ncarc[arcangle=0]{-}{b2}{b3}
\ncarc[arcangle=0]{-}{b3}{b4}

\ncarc[arcangle=0]{-}{c1}{c2}
\ncarc[arcangle=0]{-}{c2}{c3}
\ncarc[arcangle=0]{-}{c3}{c4}

\ncarc[arcangle=0]{-}{d1}{d2}
\ncarc[arcangle=0]{-}{d2}{d3}
\ncarc[arcangle=0]{-}{d3}{d4}

\ncarc[arcangle=0]{-}{a4}{b4}
\ncarc[arcangle=0]{-}{b4}{c4}
\ncarc[arcangle=0]{-}{c4}{d4}

\ncarc[arcangle=-30]{-}{a4}{c4}
\ncarc[arcangle=-30]{-}{b4}{d4}
\ncarc[arcangle=-45]{-}{a4}{d4}

\pscircle*(1,0){0.15}
\pscircle*(1,1){0.15}
\pscircle*(1,2){0.15}
\pscircle*(1,3){0.15}
\pscircle*(3,0){0.15}


\cnode(6,0){0.1cm}{aa1}
\cnode(7,0){0.1cm}{aa2}
\cnode(8,0){0.1cm}{aa3}
\cnode(9,0){0.1cm}{aa4}

\cnode(6,1){0.1cm}{bb1}
\cnode(7,1){0.1cm}{bb2}
\cnode(8,1){0.1cm}{bb3}
\cnode(9,1){0.1cm}{bb4}

\cnode(6,2){0.1cm}{cc1}
\cnode(7,2){0.1cm}{cc2}
\cnode(8,2){0.1cm}{cc3}
\cnode(9,2){0.1cm}{cc4}

\cnode(6,3){0.1cm}{dd1}
\cnode(7,3){0.1cm}{dd2}
\cnode(8,3){0.1cm}{dd3}
\cnode(9,3){0.1cm}{dd4}

\ncarc[arcangle=0]{-}{aa1}{aa2}
\ncarc[arcangle=0]{-}{aa2}{aa3}
\ncarc[arcangle=0]{-}{aa3}{aa4}

\ncarc[arcangle=0]{-}{bb1}{bb2}
\ncarc[arcangle=0]{-}{bb2}{bb3}
\ncarc[arcangle=0]{-}{bb3}{bb4}

\ncarc[arcangle=0]{-}{cc1}{cc2}
\ncarc[arcangle=0]{-}{cc2}{cc3}
\ncarc[arcangle=0]{-}{cc3}{cc4}

\ncarc[arcangle=0]{-}{dd1}{dd2}
\ncarc[arcangle=0]{-}{dd2}{dd3}
\ncarc[arcangle=0]{-}{dd3}{dd4}

\ncarc[arcangle=0]{-}{aa4}{bb4}
\ncarc[arcangle=0]{-}{bb4}{cc4}
\ncarc[arcangle=0]{-}{cc4}{dd4}

\ncarc[arcangle=-30]{-}{aa4}{cc4}
\ncarc[arcangle=-30]{-}{bb4}{dd4}
\ncarc[arcangle=-45]{-}{aa4}{dd4}

\pscircle*(7,0){0.15}
\pscircle*(8,0){0.15}
\pscircle*(9,0){0.15}

\pscircle*(7,1){0.15}
\pscircle*(8,1){0.15}
\pscircle*(9,1){0.15}

\pscircle*(7,2){0.15}
\pscircle*(8,2){0.15}
\pscircle*(9,2){0.15}

\pscircle*(7,3){0.15}
\pscircle*(8,3){0.15}
\pscircle*(9,3){0.15}

\end{pspicture}
\end{center}
\caption{The black vertices indicate a minimum dominating set (resp.~a minimum connected dominating set) of $G_4$.}
\label{fig P9 C9}
\end{figure}
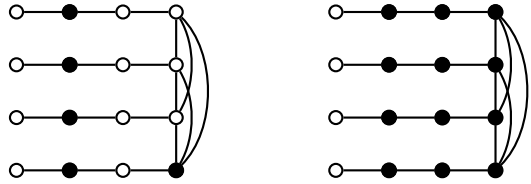

\section{A Conjecture}

We close the paper with a conjecture that came up during our research.
As Theorem~\ref{thm P8C8 free} shows, $\gamma_c \le 2 \gamma$ holds in every connected $(P_8,C_8)$-free graph.
However, $\gamma_c(P_8)/\gamma(P_8) = 2 = \gamma_c(C_8)/\gamma(C_8)$, i.e., both $P_8$ and $C_8$ do not violate the bound given by Theorem~\ref{thm P8C8 free}. 

\begin{conjecture}\label{conj P9 C9 H}
For every connected $(P_9,C_9,H)$-free graph, $\gamma_c \le 2 \gamma$ (see Fig.~\ref{fig H} for $H$).
\end{conjecture}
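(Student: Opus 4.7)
The plan is to follow the template of the proof of Theorem~\ref{thm P8C8 free}. Let $G=(V,E)$ be a connected $(P_9,C_9,H)$-free graph, let $D$ be a minimum dominating set with components $D_1,\ldots,D_k$ of $G[D]$, let $C\subseteq V$ be inclusionwise minimal such that $G[D\cup C]$ is connected, and let $X\subseteq D\cup C$ be a minimal connected dominating set. By Lemma~\ref{lem short paths in CDS}, $G[X]$ is $P_7$-free, and by heredity it is also $H$-free. As in Theorem~\ref{thm P8C8 free}, set $I=\{i : D_i\cap X=\emptyset\}$, for each $i\in I$ pick $x_i\in X$ adjacent to some vertex of $D_i$, and put $S=\bigcup_{i\notin I}(D_i\cap X)\cup\{x_i : i\in I\}$, so that $|S|\le \sum_{i\notin I}|D_i\cap X|+|I|\le |D|=\gamma$.

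The heart of the proof is a structural result for connected $(P_7,H)$-free graphs analogous to the one of van~'t Hof and Paulusma exploited in Theorem~\ref{thm P8C8 free}: every such graph should admit a connected dominating set $Y$ of a controlled shape, e.g.\ either $|Y|$ is bounded by a small absolute constant, or $G[Y]$ contains a spanning complete bipartite subgraph (possibly after passing to a connected dominating subset of $Y$). The role of the extra forbidden subgraph $H$ is precisely to exclude the ``spider''-type attachments appearing in the family $G_k$ used to show that $\gamma_c/\gamma$ is asymptotically $3$ in general connected $(P_9,C_9)$-free graphs; without $H$-freeness, no such dichotomy can hold.

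Granting such a $Y$ applied to $G[X]$, for each $1\le i\le k$ choose $y_i\in Y$ with $N_{G[X]}[y_i]\cap D_i\neq\emptyset$ when $i\notin I$, or $y_i\in N_{G[X]}[x_i]$ when $i\in I$. In the bounded case $C$ is essentially $\{x_i:i\in I\}\cup Y$ with $|Y|=O(1)$, so $|X|\le |S|+O(1)$. In the complete-bipartite case one argues exactly as in the proof of Theorem~\ref{thm P8C8 free}: either both parts of the bipartition are already hit by some $y_i$, or one adjoins a single extra vertex $z$ from the unhit part and reroutes through a component $D_l$ containing a neighbor of $z$. In either situation
\[
\gamma_c \le |X| \le |S|+k+O(1) \le |D|+k+O(1) \le 2\gamma + O(1),
\]
using $k\le |D|=\gamma$; the residual additive constant is absorbed by handling the small values of $\gamma$ through an ad hoc case analysis, analogous to the $\gamma\le 3$ treatment at the start of the proof of Theorem~\ref{thm P8C8 free}.

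The main obstacle is the structural lemma for connected $(P_7,H)$-free graphs. No dichotomy of the required strength is known for $P_7$-free graphs alone, and the very purpose of introducing $H$ is to restore such a dichotomy. Pinning down the precise shape of $Y$ allowed under the hypotheses of $(P_7,H)$-freeness, and establishing that every connected graph in this class carries such a $Y$, is the single step separating the present methods from the conjecture and is where I expect the real difficulty to lie. The ``reduction'' part of the argument sketched above is essentially bookkeeping once the structure theorem is in hand.
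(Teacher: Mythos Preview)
The statement you are attempting to prove is presented in the paper as a \emph{conjecture}, not a theorem; the paper offers no proof of it whatsoever. There is therefore nothing to compare your proposal against. The paper's only discussion is the remark that $P_9$, $C_9$, and $H$ each violate $\gamma_c\le 2\gamma$, so that if the conjecture holds it characterises the maximal hereditary class where the bound is valid.

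Your proposal is an honest outline rather than a proof, and you correctly identify where the real content would have to be: a structural dichotomy for connected $(P_7,H)$-free graphs playing the role that the van~'t Hof--Paulusma theorem plays for $P_6$-free graphs in the proof of Theorem~\ref{thm P8C8 free}. You are right that no such result is currently available, and that without it the bookkeeping reduction you describe (the sets $S$, the $x_i$, the $y_i$, the swap with a vertex $z$ from the unhit bipartition class) cannot get off the ground. You are also right that $H$ is chosen precisely so that the extremal family $G_k$ from the paper, which witnesses $\gamma_c/\gamma\to 3$ in the $(P_9,C_9)$-free world, contains an induced copy of $H$ and is therefore excluded.

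One caution on the endgame: your phrasing ``the residual additive constant is absorbed by handling the small values of $\gamma$'' only works if the constant produced by the bounded case of the hoped-for dichotomy is itself at most $\gamma$ in the regime under consideration (as the constant $4$ was bounded by $\gamma\ge 4$ in Theorem~\ref{thm P8C8 free}); it does not eliminate an additive constant for large $\gamma$ by case analysis alone. So the eventual structural lemma would need to deliver a small enough bound in its ``bounded $Y$'' branch for this to close. In any case, until such a lemma is proved, the conjecture remains open, exactly as the paper leaves it.
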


Note that $P_9$, $C_9$, and $H$ violate $\gamma_c \le 2 \gamma$.
Hence, if true, Conjecture~\ref{conj P9 C9 H} would give a characterization of the largest graph class that is closed under connected induced subgraphs where $\gamma_c \le 2 \gamma$ holds.

\begin{figure}[ht]
\begin{center}
\begin{pspicture}(0,0)(4.8,1)

\cnode(3,0){0.1cm}{a1}
\cnode(2,0){0.1cm}{a2}
\cnode(1,0){0.1cm}{a3}
\cnode(0,0){0.1cm}{a4}

\cnode(3,1){0.1cm}{b1}
\cnode(2,1){0.1cm}{b2}
\cnode(1,1){0.1cm}{b3}
\cnode(0,1){0.1cm}{b4}

\cnode(3.8,0.5){0.1cm}{c}
\cnode(4.8,0.5){0.1cm}{d}

\ncarc[arcangle=0]{-}{a1}{a2}
\ncarc[arcangle=0]{-}{a2}{a3}
\ncarc[arcangle=0]{-}{a3}{a4}

\ncarc[arcangle=0]{-}{b1}{b2}
\ncarc[arcangle=0]{-}{b2}{b3}
\ncarc[arcangle=0]{-}{b3}{b4}

\ncarc[arcangle=0]{-}{a1}{b1}
\ncarc[arcangle=0]{-}{a1}{c}
\ncarc[arcangle=0]{-}{c}{b1}
\ncarc[arcangle=0]{-}{c}{d}

\end{pspicture}
\caption{The graph $H$ from Conjecture~\ref{conj P9 C9 H}.}
\label{fig H}
\end{center}
\end{figure}
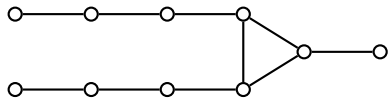

\end{document}